\newtheorem{definition}{Definition}
\newtheorem{proposition}{Proposition}
\newtheorem{theorem}{Theorem}
\newtheorem{lemma}{Lemma}
\newtheorem{remark}{Remark}
\begin{document}

\title{Introducing the truly chaotic finite state machines and theirs applications in security field}
\author{Christophe Guyeux and Qianxue Wang and Xiole Fang and Jacques Bahi}

\maketitle

\abstract
The truly chaotic finite machines introduced by authors in previous research papers are presented here.
A state of the art in this discipline, encompassing all previous mathematical investigations, is 
provided, explaining how finite state machines can behave chaotically regarding the slight
alteration of their inputs. This behavior is explained using Turing machines and formalized 
thanks to a special family of discrete dynamical systems called chaotic iterations. An 
illustrative example is finally given in the field of hash functions.
\endabstract

\section{Introduction}

The use of chaotic dynamics in cryptography is often 
disputed as a finite state machine is reputed to always
enter into a cycle. Even though such a regular  
behavior is not completely opposed to almost all 
definitions of chaos in mathematics, constituting 
a kind of border situation in case of discrete sets, 
this situation appears as problematic to cryptologists
that consider periodic dynamics and chaos as antithetical.
This problem can be solved by introducing truly chaotic
finite machines.
A second situation where such chaotic machines can 
serve is in the numerical simulations of chaotic real
phenomena. Iterating chaotic dynamical systems of the
real line on finite state machines leads to truncated
sequences. It is possible to show that such periodic
orbits of truncated terms are as close as possible to
truly chaotic real ones via the shadow lemma. However
the orbit that is approximated by the finite machine
has a initial condition with a priori no relation with
the targeted one.

Our proposal is to constitute truly chaotic finite
machines, that is, finite machines that can be 
rigorously  proven as chaotic, as defined by 
Devaney~\cite{Devaney}, Knudsen~\cite{Knudsen94}, and so on.
The key idea is to consider that the finite machine
is not separated from the outside world but that it must interact with 
it in order to be useful. At each iteration, the new
input provided to the finite machine can be used 
together with its current state to produce the next
output. By doing so, the finite machine iterates on the
finite cartesian product of its possible states multiply
by all the possible inputs. The machine can be
written as an iterative process, thus it still remains
to study the behavior of outputs on slight modifications
on the inputs.
After having recalled the various existing notions of
chaos in mathematical topology, this idea is formalized
theoretically using Turing machines and explained 
practically thanks to the so-called chaotic iterations.
An example of use is finally provided in the field of
hash functions.

The remainder of this research work, which summarizes 
our recent discoveries in truly chaotic finite machines
and theirs applications, is organized as follows.

\section{The Mathematical Theory of Chaos}
In the whole document, to prevent from any conflicts and
to avoid unreadable writings, we have considered the following notations,
usually in use in discrete mathematics:
\begin{itemize}
\item The $n-$th term of the sequence $s$ is denoted by $s^n$.
\item The $i-$th component of vector $v$ is $v_i$.
\item The $k-$th composition of function $f$ is denoted by $f^{k}$. Thus $f^{k} = f \circ f \circ \hdots \circ f$, $k$ times.
\item The derivative of $f$ is $f'$.
\end{itemize}
$\mathcal{P}(X)$ is the set of subsets of $X$.
On the other hand $\mathds{B}$ stands for the set $\{0;1\}$ with its
usual algebraic structure (Boolean addition, multiplication, and negation),
while  $\mathds{N}$ and
$\mathds{R}$ are the usual notations of the following respective sets:
natural numbers and real numbers.
$\mathcal{X}^\mathcal{Y}$ is the set of applications 
from $\mathcal{Y}$ to $\mathcal{X}$, and thus $\mathcal{X}^\mathds{N}$
means the set of sequences belonging in $\mathcal{X}$.
We will use the notation $\lfloor x \rfloor$ for the integral part of a real $x$,
that is, the greatest integer lower than $x$.
Finally, $\llbracket a; b \rrbracket = \{a, a+1, \hdots, b\}$ is
the set of integers between  $a$ and $b$.

In this section are presented various understanding of a chaotic behavior for a discrete
dynamical system.

\subsection{Approaches similar to Devaney}

In these approaches, three ingredients are required for unpredictability~\cite{Formenti1998}. 
Firstly, the system must be intrinsically complicated, undecomposable: it cannot be simplified into two
subsystems that do not interact, making any divide and conquer strategy 
applied to the system inefficient. In particular, a lot of orbits must visit
the whole space. Secondly, an element of regularity is added, to counteract
the effects of the first ingredient, leading to the fact that closed points
can behave in a completely different manner, and this behavior cannot be predicted.
Finally, sensibility of the system is demanded as a third ingredient, making that
close points can finally become distant during iterations of the system.
This last requirement is, indeed, often implied by the two first ingredients.
Having this understanding of an unpredictable dynamical system, Devaney has
formalized in~\cite{Devaney} the following definition of chaos.

\begin{definition}
A discrete dynamical system $x^0 \in \mathcal{X}, x^{n+1}=f(x^n)$ on a
metric space $(\mathcal{X},d)$ is chaotic according to Devaney
if:
    \begin{enumerate}
\item \emph{Transitivity:} For each couple of open sets $A,B \subset \mathcal{X}$, there exists $\exists k \in \mathbb{N}$ such that $f^k(A)\cap B \neq \varnothing$.
\item \emph{Regularity:} Periodic points are dense in $\mathcal{X}$.
\item \emph{Sensibility to the initial conditions:} There exists $\varepsilon>0$ such that $$\forall x \in \mathcal{X}, \forall \delta >0, \exists y \in \mathcal{X}, \exists n \in \mathbb{N}, d(x,y)<\delta \textrm{ and } d(f^n(x),f^n(y)) \geqslant \varepsilon.$$
\end{enumerate}
\end{definition}

The system can be intrinsically complicated for various other understanding of this wish, that are 
not equivalent one another, like:
\begin{itemize}
    \item \emph{Undecomposable}: it is not the union of two nonempty closed subsets that are positively invariant ($f(A) \subset A$).
  \item \emph{Total transitivity}: $\forall n \geqslant 1$, the function composition $f^n$ is transitive.
  \item \emph{Strong transitivity}: $\forall x,y \in \mathcal{X},$ $\forall r>0,$ $\exists z \in B(x,r),$ $\exists n \in \mathbb{N},$ $f^n(z)=y.$
  \item \emph{Topological mixing}:  for all pairs of disjoint open nonempty sets $U$ and $V$, there exists $n_0 \in \mathbb{N}$ such that $\forall n \geqslant n_0, f^n(U) \cap V \neq \varnothing$.
\end{itemize}

Concerning the ingredient of sensibility, it can be reformulated as follows.
\begin{itemize}
  \item $(\mathcal{X},f)$ is \emph{unstable} is all its points are unstable: $\forall x \in \mathcal{X},$ $\exists \varepsilon >0,$ $\forall \delta > 0,$ $\exists y \in \mathcal{X},$ $\exists n \in \mathbb{N},$ $d(x,y)<\delta$ and $d(f^n(x),f^n(y)) \geqslant \varepsilon$.
  \item $(\mathcal{X},f)$ is \emph{expansive} is $\exists \varepsilon >0,$ $\forall x \neq y,$ $\exists n \in \mathbb{N},$ $d(f^n(x),f^n(y)) \geqslant \varepsilon$
\end{itemize}

These varieties of definitions lead to various notions of chaos. For instance, 
a dynamical system is chaotic according to Wiggins if it is transitive and
sensible to the initial conditions. It is said chaotic according to Knudsen
if it has a dense orbit while being sensible. Finally, we speak about
expansive chaos when the properties of transitivity, regularity, and expansiveness
are satisfied.

\subsection{Li-Yorke approach}

The approach for chaos presented in the previous section, considering that
a chaotic system is a system intrinsically complicated (undecomposable),
with possibly an element of regularity and/or sensibility, has been
completed by other understanding of chaos. Indeed, as ``randomness''
or ``infinity'', a single universal definition of chaos cannot
be found. The kind of behaviors that are attempted to be described are
too much complicated to enter into only one definition. Instead, a 
large panel of mathematical descriptions have been proposed these last
decades, being all theoretically justified. Each of these definitions
give illustration to some particular aspects of a chaotic behavior.

The first of these parallel approaches can be found in the pioneer
work of Li and Yorke~\cite{Li75}. In their well-known article entitled
``Period three implies chaos'', they rediscovered a weaker formulation of 
the Sarkovskii's theorem, meaning that when a discrete dynamical system 
$(f,[0,1])$, with $f$ continuous, has a 3-cycle, then it has too a 
$n-$cycle, $\forall n \leqslant 2$. The community has not adopted this
definition of chaos, as several degenerated systems satisfy this property.
However, on their article~\cite{Li75}, Li and Yorke have studied too 
another interesting property, which has led to a notion of chaos 
``according to Li and Yorke'' recalled below.

\begin{definition}
Let $(\mathcal{X},d)$ a metric space and $f:\mathcal{X} \longrightarrow
\mathcal{X}$ a continuous map. $(x,y)\in \mathcal{X}^2$ is a scrambled 
couple of points if $\liminf_{n\rightarrow \infty} d(f^n(x),f^n(y))=0$
and  $\limsup_{n\rightarrow \infty} d(f^n(x),f^n(y))>0$: the two orbits oscillate.

A scrambled set is a set in which any couple of points are
a scrambled couple, whereas a Li-Yorke chaotic system is a system 
possessing an uncountable scrambled set.
\end{definition}

\subsection{Topological entropy approach}

Let $f:\mathcal{X} \longrightarrow \mathcal{X}$ be a continuous map on
a compact metric space $(\mathcal{X},d)$. For each natural 
number $n$, a new metric $d_n$ is defined on $\mathcal{X}$ by 
$$d_n(x,y)=\max\{d(f^i(x),f^i(y)): 0\leq i<n\}.$$

Given any $\varepsilon >0$ and $n \geqslant 1$, two points of 
$\mathcal{X}$ are $\varepsilon$-close with respect to 
this metric if their first $n$ iterates are $\varepsilon$-close. This 
metric allows one to distinguish in a neighborhood of an orbit the 
points that move  away from each other during the iteration from the 
points that travel together. 
A subset $E$ of $\mathcal{X}$ is said to be $(n,\varepsilon)$-separated 
if each pair of distinct points of $E$ is at least $\varepsilon$ apart 
in the metric $d_n$. Denote by $N(n, \varepsilon)$ the 
maximum cardinality of a $(n,\varepsilon)$-separated set. 
$N(n, \varepsilon)$ represents the number of distinguishable orbit 
segments of length $n$, assuming that we cannot distinguish points 
within $\varepsilon$ of one another. 

\begin{definition}
The topological 
entropy of the map $f$ is defined by
$$h(f)=\lim_{\epsilon\to 0} \left(\limsup_{n\to \infty} \frac{1}{n}
\log N(n,\epsilon)\right).$$
\end{definition}

The limit defining $h(f)$ may 
be interpreted as the measure of the average exponential growth of the 
number of distinguishable orbit segments. In this sense, it measures 
complexity of the topological dynamical system $(\mathcal{X}, f)$.

\subsection{The Lyapunov exponent}

The last measure of chaos that will be regarded in this document is the Lyapunov exponent. This
quantity characterizes the rate of separation of infinitesimally close 
trajectories. Indeed, two trajectories in phase space with initial 
separation $\delta$ diverge at a rate approximately
equal to $\delta e^{\lambda t}$,
where $\lambda$ is the Lyapunov exponent, which is defined by:

\begin{definition}
Let $f:\mathds{R} \longrightarrow \mathds{R}$ be a differentiable
function, and $x^0\in \mathds{R}$. The Lyapunov exponent is given by
$\lambda(x^0) = \displaystyle{\lim_{n \to +\infty} \dfrac{1}{n} \sum_{i=1}^n \ln \left| ~f'\left(x^{i-1}\right)\right|}.$
\end{definition}

Obviously, this exponent must be positive to have a multiplication of the initial
errors by an exponentially increasing factor, and thus chaos in this understanding.

\section{The So-called Chaotic Iterations}

Our proposal in creating chaotic finite machines is to take a new input 
at each iteration. This process can be realized using a tool called
chaotic iterations.

\subsection{Introducing chaotic iterations}

\begin{definition}
\label{defIC}
Let $f: \mathds{B}^\mathsf{N} \longrightarrow \mathds{B}^\mathsf{N}$ and 
$S \in \mathcal{P} \left(\llbracket1,\mathsf{N}\rrbracket\right)^\mathds{N}$. 
\emph{Chaotic iterations} $(f, (x^0, S))$ are defined by:
$$\left\{
\begin{array}{l}
x^0 \in \mathds{B}^\mathsf{N} \\
\forall n \in \mathds{N}^*, \forall i \in \llbracket 1; \mathsf{N} \rrbracket,  x^{n}_i = \left\{
\begin{array}{ll}
x^{n-1}_{i} & \textrm{ if } i \notin S^n\\
f(x^{n-1})_{i} & \textrm{ if } i \in S^n
\end{array}
\right.
\end{array}
\right.$$
\end{definition}

\emph{A priori}, there is no relation between these chaotic iterations
and the mathematical theory of chaos recalled in the previous section.
On our side, we have regarded whether these chaotic iterations can 
behave chaotically, as it is defined for instance by Devaney, and if so, 
in which application context this behavior can be profitable.
To do so, 
chaotic iterations have first been rewritten as simple discrete 
dynamical systems, as follows.

\subsection{Chaotic Iterations as Dynamical Systems}

To realize the junction between the two frameworks presented previously, the following
material can be introduced~\cite{bgw09:ip}:
\begin{itemize}
\item the shift function: $\sigma : \mathcal{S} \longrightarrow \mathcal{S}, (S^n)_{n \in \mathds{N}} \mapsto (S^{n+1})_{n \in \mathds{N}}$.
\item the initial function, defined by $i : \mathcal{S} \longrightarrow \llbracket 1 ; \mathsf{N} \rrbracket, (S^n)_{n \in \mathds{N}} \mapsto S^0$
\item and $F_f : \llbracket 1 ; \mathsf{N} \rrbracket \times \mathds{B}^\mathsf{N} \longrightarrow \mathds{B}^\mathsf{N},$ $$(k,E) \longmapsto  \left( E_j.\delta(k,j) + f(E)_k.\overline{\delta (k,j)} \right)_{j \in \llbracket 1 ; \mathsf{N} \rrbracket}$$
\end{itemize}
where $\delta$ is the discrete metric.

Let $\mathcal{X} = \llbracket 1 ; \mathsf{N} \rrbracket^\mathds{N} \times \mathds{B}^\mathsf{N},$ and $G_f\left(S,E\right) = \left(\sigma(S), F_f(i(S),E)\right).$ 
Chaotic iterations $\left(f, (S,x^0)\right)$ can be modeled by the 
discrete dynamical system:
$$\left\{
\begin{array}{l}
X^0 = (S,x^0) \in \mathcal{X}, \\
\forall k \in \mathds{N}, X^{k+1} = G_f(X^k).
\end{array}
\right.$$
Their topological disorder can then be studied.
To do so, a relevant distance must be defined on $\mathcal{X}$, as
follows~\cite{gb11:bc}:
$$d((S,E);(\check{S};\check{E})) = d_e(E,\check{E}) +  d_s(S,\check{S})$$
\noindent where $\displaystyle{d_e(E,\check{E}) = \sum_{k=1}^\mathsf{N} \delta (E_k, \check{E}_k)}$, ~~and~ $\displaystyle{d_s(S,\check{S}) = \dfrac{9}{\textsf{N}} \sum_{k = 1}^\infty \dfrac{|S^k-\check{S}^k|}{10^k}}$.

This new distance has been introduced to satisfy the following requirements.
\begin{itemize}
\item When the number of different cells between two systems is increasing, then their distance should increase too.
\item In addition, if two systems present the same cells and their respective strategies start with the same terms, then the distance between these two points must be small because the evolution of the two systems will be the same for a while. Indeed, the two dynamical systems start with the same initial condition, use the same update function, and as strategies are the same for a while, then components that are updated are the same too.
\end{itemize}
The distance presented above follows these recommendations. Indeed, if the floor value $\lfloor d(X,Y)\rfloor $ is equal to $n$, then the systems $E, \check{E}$ differ in $n$ cells. In addition, $d(X,Y) - \lfloor d(X,Y) \rfloor $ is a measure of the differences between strategies $S$ and $\check{S}$. More precisely, this floating part is less than $10^{-k}$ if and only if the first $k$ terms of the two strategies are equal. Moreover, if the $k^{th}$ digit is nonzero, then the $k^{th}$ terms of the two strategies are different.
It can then be stated that
\begin{proposition}
$G_f : (\mathcal{X},d) \to  (\mathcal{X},d)$ is a continuous function
\end{proposition}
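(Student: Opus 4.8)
The plan is to establish continuity directly from the $\varepsilon$--$\delta$ definition at an arbitrary point $X=(S,E)\in\mathcal{X}$, exploiting the fact that $d$ is the sum of the ``discrete'' part $d_e$ on $\mathds{B}^\mathsf{N}$ and the ``decimal'' part $d_s$ on the strategies. First I would fix $\varepsilon>0$ and assume, without loss of generality, that $\varepsilon<1$. The structural remarks recalled just before the statement do most of the work: since $d_e$ takes integer values, $d\big((S,E),(S',E')\big)<1$ already forces $E'=E$; and since the $k$-th decimal of $d_s$ controls the $k$-th term of the strategy, requiring $d_s(S,S')$ small enough forces $S$ and $S'$ to agree on as many of their leading terms as we wish, in particular on the term extracted by $i$.

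Next I would split the target distance using $G_f(S,E)=\big(\sigma(S),F_f(i(S),E)\big)$:
$$d\big(G_f(S,E),G_f(S',E')\big)=d_e\big(F_f(i(S),E),F_f(i(S'),E')\big)+d_s\big(\sigma(S),\sigma(S')\big).$$
For the first summand, observe that $F_f$ depends on the strategy only through the single index $i(S)$ and on $E$ only through finitely many bits; hence, once $\delta$ is small enough that the input constraints above yield $E'=E$ and $i(S')=i(S)$, this summand vanishes identically, i.e. $F_f$ is locally constant and therefore continuous there. For the second summand I would use the elementary tail estimate attached to the shift: deleting the first term and shifting multiplies each remaining weight $10^{-k}$ by $10$, so that $d_s(\sigma(S),\sigma(S'))\leqslant 10\,d_s(S,S')\leqslant 10\,d\big((S,E),(S',E')\big)$.

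Putting these together, it suffices to take $\delta=\min\{1,\varepsilon/10,\eta\}$, where $\eta>0$ is the threshold, read off from the decimal characterisation of $d_s$, below which $d_s(S,S')<\eta$ guarantees $i(S')=i(S)$; then $d(X,X')<\delta$ yields $d(G_f(X),G_f(X'))<\varepsilon$, and since $X$ was arbitrary, $G_f$ is continuous on $(\mathcal{X},d)$. The only genuinely delicate point is the behaviour of $\sigma$: because the shift is expansive, magnifying strategy distances by a factor of up to ten, one really must keep $\delta$ proportionally smaller than $\varepsilon$ rather than hoping for a naive $\delta=\varepsilon$, and one must make sure the decimal encoding of $d_s$ indeed pins down the term read by $i$; everything else — the $\mathds{B}^\mathsf{N}$ component, the map $i$, and the function $F_f$ — is locally constant in the relevant variables and hence poses no difficulty.
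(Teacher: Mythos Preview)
The paper itself does not prove this proposition: it is simply stated, with the surrounding material crediting~\cite{gb11:bc}, since this article is a survey of the authors' earlier work. There is therefore nothing in the paper to compare against directly, but your argument is the standard one and is correct. The decomposition of $d$ into $d_e$ and $d_s$, the observation that $d<1$ forces $E'=E$ (because $d_e$ is integer-valued), the local constancy of $F_f$ once the leading strategy term is pinned down, and the Lipschitz estimate $d_s(\sigma(S),\sigma(S'))\leqslant 10\,d_s(S,S')$ for the shift are precisely the ingredients needed, and your choice $\delta=\min\{1,\varepsilon/10,\eta\}$ assembles them cleanly.

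One cautionary note: as literally written in this survey, $d_s$ sums from $k=1$ while the initial function $i$ extracts the term $S^0$, so taken at face value the metric does not see the term that $i$ reads. This is an indexing inconsistency in the exposition rather than a flaw in your reasoning, and your closing caveat that one must check that ``the decimal encoding of $d_s$ indeed pins down the term read by $i$'' is therefore well placed.
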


With all this material, the study of chaotic iterations as a discrete
dynamical system has then be realized. 
The topological space on which chaotic iterations are defined has
firstly been investigated, leading to the following result~\cite{gb11:bc}:
\begin{proposition}
$\mathcal{X}$ is an infinitely countable metric space, being both
compact, complete, and perfect (each point is an accumulation point).
\end{proposition}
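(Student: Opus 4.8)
The plan is to verify each of the five asserted properties of $\mathcal{X} = \llbracket 1;\mathsf{N}\rrbracket^\mathds{N} \times \mathds{B}^\mathsf{N}$ in turn, exploiting the product structure: the first factor is a countable product of finite sets and the second is a finite set. First I would recall that $\mathds{B}^\mathsf{N}$, carrying the discrete metric $d_e$, is a finite metric space, hence trivially compact, complete, and discrete; every point of a finite discrete space is, however, isolated, so the \emph{perfectness} must come entirely from the strategy factor. For the countability claim, I would note that $\llbracket 1;\mathsf{N}\rrbracket^\mathds{N}$ is the set of sequences over a finite alphabet of size $\mathsf{N}\geq 2$, which is uncountable (a Cantor-type diagonal argument, or a surjection onto $\{0,1\}^\mathds{N}$); so I would first flag that ``infinitely countable'' should presumably be read as ``having a countable dense subset'', i.e. \emph{separable}, and prove that instead — the eventually-constant (or eventually-periodic) strategies paired with points of $\mathds{B}^\mathsf{N}$ form a countable dense set, using the earlier observation that $d(X,Y)<10^{-k}$ forces agreement of the first $k$ strategy terms and of all cells.

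For \textbf{compactness}, I would argue that $(\mathcal{X},d)$ is metrizably the product of the compact space $\llbracket 1;\mathsf{N}\rrbracket^\mathds{N}$ (Tychonoff, finite factors) with the finite space $\mathds{B}^\mathsf{N}$; one must first check that the topology induced by $d$ coincides with the product topology, which follows from the two bulleted properties of $d$ recalled just above the proposition (the floor part controls cells, the fractional part controls finite strategy prefixes). Compactness then follows, or alternatively one gives a direct sequential-compactness proof: from any sequence $X^{(m)}=(S^{(m)},E^{(m)})$ extract a subsequence constant in $E$ (finiteness of $\mathds{B}^\mathsf{N}$), then diagonalize over the coordinates of $S$ to stabilize each strategy term, obtaining a convergent subsequence. \textbf{Completeness} is immediate once compactness is in hand (a compact metric space is complete), or again directly: a Cauchy sequence is eventually constant in its cells and in each fixed strategy coordinate, and the coordinatewise limit lies in $\mathcal{X}$.

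For \textbf{perfectness}, given $(S,E)\in\mathcal{X}$ and $\varepsilon>0$, pick $k$ with $10^{-k}<\varepsilon\mathsf{N}/9$ and define $\check S$ to agree with $S$ on the first $k$ terms but to differ at some index beyond $k$ (possible since $\mathsf{N}\geq 2$ gives at least two choices for each $\check S^j$); then $(\check S,E)\neq(S,E)$ but $d((S,E),(\check S,E))=d_s(S,\check S)<\varepsilon$, so $(S,E)$ is an accumulation point. The main obstacle I anticipate is not any single estimate — each is routine given the description of $d$ — but rather reconciling the statement with the truth: the phrase ``infinitely countable metric space'' is literally false for the stated $\mathcal{X}$, so the honest write-up has to either reinterpret the claim (separability) or restrict to a countable invariant subspace, and one should be careful that the compactness argument really uses the $d$-topology $=$ product-topology identification rather than silently assuming it.
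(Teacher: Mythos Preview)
The paper does not actually prove this proposition in-text; it is stated with a citation to~\cite{gb11:bc} and no argument is given, so there is nothing here to compare your outline against.

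On its own merits your plan is correct and complete: the sequential-compactness diagonalization (or the identification of the $d$-topology with the product topology followed by Tychonoff on finite factors) gives compactness, completeness then follows, and your perfectness argument by altering the strategy $S$ beyond index $k$ is exactly the natural one. You are also right to flag the phrase ``infinitely countable'' as literally incompatible with the remaining claims: a nonempty perfect compact metric space has cardinality $2^{\aleph_0}$, and $\llbracket 1;\mathsf{N}\rrbracket^{\mathds{N}}$ is already uncountable for $\mathsf{N}\geq 2$. Your reinterpretation as \emph{separability}, witnessed by the countable set of eventually-constant strategies paired with all $E\in\mathds{B}^{\mathsf{N}}$, is the only coherent reading; it would be worth saying so explicitly in the write-up rather than silently proving a different statement.
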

These properties are required in some topological specific 
formalization of a chaotic dynamical system, justifying their
proofs.
Concerning $G_{f_0}$, it has been stated that~\cite{gb11:bc}.
\begin{proposition}
$G_{f_0}$ is surjective, but not injective, and so the dynamical system $(\mathcal{X},G_{f_0})$ is not reversible.
\end{proposition}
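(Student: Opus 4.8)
The plan is to work directly from the explicit description $G_{f_0}(S,E) = \left(\sigma(S), F_{f_0}(i(S),E)\right)$. Recall that $f_0$ denotes the vectorial logical negation on $\mathds{B}^\mathsf{N}$, so that $F_{f_0}(k,E)$ is simply the vector obtained from $E$ by complementing its $k$-th coordinate and leaving the other ones unchanged. The single fact that drives everything is that, for each fixed $k \in \llbracket 1;\mathsf{N}\rrbracket$, the partial map $E \longmapsto F_{f_0}(k,E)$ is an \emph{involution} of $\mathds{B}^\mathsf{N}$ (immediate from $\overline{\overline{x}}=x$), hence a bijection.

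To prove surjectivity, I would pick an arbitrary target $(\check{S},\check{E}) \in \mathcal{X}$ and exhibit a preimage by hand. Fix any index $k_0 \in \llbracket 1;\mathsf{N}\rrbracket$; let $E$ be the vector deduced from $\check{E}$ by complementing its $k_0$-th coordinate, so that $F_{f_0}(k_0,E)=\check{E}$ by the involution property; and let $S$ be the strategy defined by $S^0=k_0$ and $S^{n+1}=\check{S}^n$ for all $n \in \mathds{N}$, so that $\sigma(S)=\check{S}$ and $i(S)=k_0$. Then $G_{f_0}(S,E)=(\sigma(S),F_{f_0}(i(S),E))=(\check{S},\check{E})$, which shows that $G_{f_0}$ is onto.

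Non-injectivity will then come for free by letting the free parameter $k_0$ vary. Under the standing assumption $\mathsf{N} \geqslant 2$, choose two distinct indices $a \neq b$ and, for a fixed target $(\check{S},\check{E})$, form the two preimages $(S_a,E_a)$ and $(S_b,E_b)$ produced by the construction above with $k_0=a$ and $k_0=b$ respectively. These are genuinely different points of $\mathcal{X}$ — their strategies already disagree at rank $0$, and in fact $E_a$ and $E_b$ differ in both coordinates $a$ and $b$ — while $G_{f_0}(S_a,E_a)=G_{f_0}(S_b,E_b)=(\check{S},\check{E})$. Hence $G_{f_0}$ is not injective, so it is not bijective, and therefore admits no inverse: the dynamical system $(\mathcal{X},G_{f_0})$ is not reversible.

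I do not expect a real obstacle here; the argument is essentially a bookkeeping exercise on the definitions of $\sigma$, $i$ and $F_{f_0}$. The only two points deserving explicit mention are the involution property of the partial negation map, and the fact that the non-injectivity half genuinely needs $\mathsf{N}\geqslant 2$ (for $\mathsf{N}=1$ the strategy is forced and $G_{f_0}$ is actually a bijection), which is why this standing hypothesis should be recalled.
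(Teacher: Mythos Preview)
Your argument is correct and complete. Note, however, that the present paper does not actually supply a proof of this proposition: it is merely stated with a reference to~\cite{gb11:bc}, so there is nothing in the paper itself to compare your approach against. Your construction of an explicit preimage via the involution property of $E \mapsto F_{f_0}(k,E)$, together with the freedom in choosing $S^0$, is exactly the natural way to establish both claims, and your remark that non-injectivity genuinely requires $\mathsf{N}\geqslant 2$ is a worthwhile observation that the paper does not make explicit.
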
 
It is now possible to recall the topological behavior of chaotic iterations.

\subsection{The Study of Iterative Systems}

We have firstly stated that~\cite{gb11:bc}:
\begin{theorem}
    $G_{f_0}$ is regular and transitive on $(\mathcal{X},d)$, thus it is 
    chaotic according to Devaney. 
Furthermore, its constant of sensibility is greater than $\mathsf{N}-1$.
\end{theorem}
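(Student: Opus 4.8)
The plan is to verify the three Devaney ingredients for $G_{f_0}$ directly and constructively, using that $f_0$ is the vectorial logical negation, so that ``updating'' a cell according to the strategy simply flips its Boolean value. Everything rests on the structure of $d$ recalled above: two points lie within some $\varepsilon<1$ exactly when their $\mathds{B}^\mathsf{N}$-components coincide and their strategies agree on a prefix whose length grows as $\varepsilon\to 0$; conversely, prescribing a long common prefix of the strategies together with an identical $\mathds{B}^\mathsf{N}$-component forces the two points to be close. This dictionary between ``distance'' and ``length of agreement'' is what makes all three arguments purely combinatorial.

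\emph{Transitivity.} Given nonempty open sets $A,B$, pick $(S,E)\in A$ and $(\check S,\check E)\in B$, and choose $k$ so large that every point whose $\mathds{B}^\mathsf{N}$-component is $E$ and whose strategy shares its first $k$ terms with $S$ still lies in $A$. I would then assemble a strategy $\widehat S$ that copies $S$ on its first $k$ terms — so the configuration after $k$ steps is a determined $E^{(k)}$ — then, over the next $t\leqslant\mathsf{N}$ terms, enumerates exactly the indices on which $E^{(k)}$ and $\check E$ disagree (each such update negating that cell), so that the configuration after $n:=k+t$ steps equals $\check E$, and finally from term $n$ on coincides with $\check S$. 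Then $G_{f_0}^{\,n}(\widehat S,E)=(\check S,\check E)\in B$, which in fact yields a strong form of transitivity.

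\emph{Regularity.} Given $(S,E)$ and $\delta>0$, keep the $\mathds{B}^\mathsf{N}$-component equal to $E$, let $q\in\mathds{B}^\mathsf{N}$ record which cells are flipped an odd number of times by $S^0,\dots,S^{k-1}$, and let $u$ be any finite word over $\llbracket1;\mathsf{N}\rrbracket$ whose own parity vector is $q$ (listing each index $i$ with $q_i=1$ once works, so $|u|\leqslant\mathsf{N}$). Then $\widehat S:=(S^0,\dots,S^{k-1},u)^{\infty}$ is periodic, one full period flips every cell an even number of times, so $(\widehat S,E)$ is a periodic point of $G_{f_0}$; choosing $k$ large makes it lie within $\delta$ of $(S,E)$. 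Thus periodic points are dense, and together with transitivity this already gives chaos in the sense of Devaney.

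\emph{Sensibility with constant $>\mathsf{N}-1$.} Fix $(S,E)$ and $\delta>0$, set $\widetilde E:=E$, and let $\widetilde S$ share with $S$ a prefix long enough for $d((\widetilde S,E),(S,E))<\delta$ while disagreeing with $S$ at infinitely many later ranks, so that the strategy part of the distance between the two orbits stays strictly positive at every step. After the common prefix the two configurations agree; I would then choose the continuation of $\widetilde S$ so that over the next block of steps its parity vector differs from that of $S$ in at least $\mathsf{N}-1$ coordinates — taking the complementary parity vector when $\mathsf{N}$ is even and correcting one coordinate to restore the total parity when $\mathsf{N}$ is odd. At the corresponding iterate $n$ the configurations then differ in at least $\mathsf{N}-1$ cells, so $d_e\geqslant\mathsf{N}-1$ while $d_s>0$, whence $d\bigl(G_{f_0}^{\,n}(\widetilde S,E),G_{f_0}^{\,n}(S,E)\bigr)>\mathsf{N}-1$. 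I expect the real friction here to be bookkeeping rather than depth: reconciling the $\mathcal{P}(\llbracket1,\mathsf{N}\rrbracket)$-valued strategies of Definition~\ref{defIC} with the $\llbracket1,\mathsf{N}\rrbracket$-valued ones carried by $\mathcal{X}$, tracking the off-by-one between $i(S)=S^0$ and the cell actually updated at each step, and — for the sharp constant $\mathsf{N}-1$ — making sure an almost-complementary configuration is reachable by a finite strategy block without breaking the $\delta$-bound, the parity constraint being exactly what forces the small case split on $\mathsf{N}$.
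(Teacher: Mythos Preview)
The paper does not actually prove this theorem here; it is stated with a citation to~\cite{gb11:bc}, so there is no in-paper argument to compare against. That said, your proposal is correct and is precisely the standard constructive proof one finds in the cited literature: exploit that $f_0$ is the vectorial negation so that one step of $G_{f_0}$ flips the single cell named by the strategy head, then manufacture strategies by concatenation to reach any target point (transitivity, in fact strong transitivity), to return periodically to $(E,\widehat S)$ (regularity), and to diverge maximally (sensibility).

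Your parity observation in the sensibility part is exactly the right subtlety, and worth making explicit since it is often glossed over: two length-$m$ blocks over $\llbracket 1,\mathsf{N}\rrbracket$ produce parity vectors of the same weight modulo~$2$, so at any common iterate the two configurations can only differ on an even number of cells. Hence $d_e=\mathsf{N}$ is attainable when $\mathsf{N}$ is even but only $d_e=\mathsf{N}-1$ when $\mathsf{N}$ is odd; in either case the strictly positive residual $d_s$ (guaranteed by your choice of infinitely many later disagreements in $\widetilde S$) pushes the total distance above $\mathsf{N}-1$. The ``friction'' you anticipate --- the index origin of $S$ and the mismatch between the $\mathcal{P}(\llbracket 1,\mathsf{N}\rrbracket)$-valued strategies of Definition~\ref{defIC} and the $\llbracket 1,\mathsf{N}\rrbracket$-valued strategies carried by $\mathcal{X}$ --- are notational inconsistencies of the paper itself, not defects in your argument.
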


Thus the set $\mathcal{C}$ of functions $f:\mathds{B}^\mathsf{N} 
\longrightarrow \mathds{B}^\mathsf{N}$ making the chaotic iterations of 
Definition~\ref{defIC} a case of chaos according to Devaney, is a nonempty
set. To characterize functions of $\mathcal{C}$, we have firstly stated
that transitivity implies regularity for these particular iterated 
systems~\cite{bcgr11:ip}. To achieve characterization, we then have introduced the
following graph.

\begin{figure}
    \centering
   \includegraphics[scale=0.55]{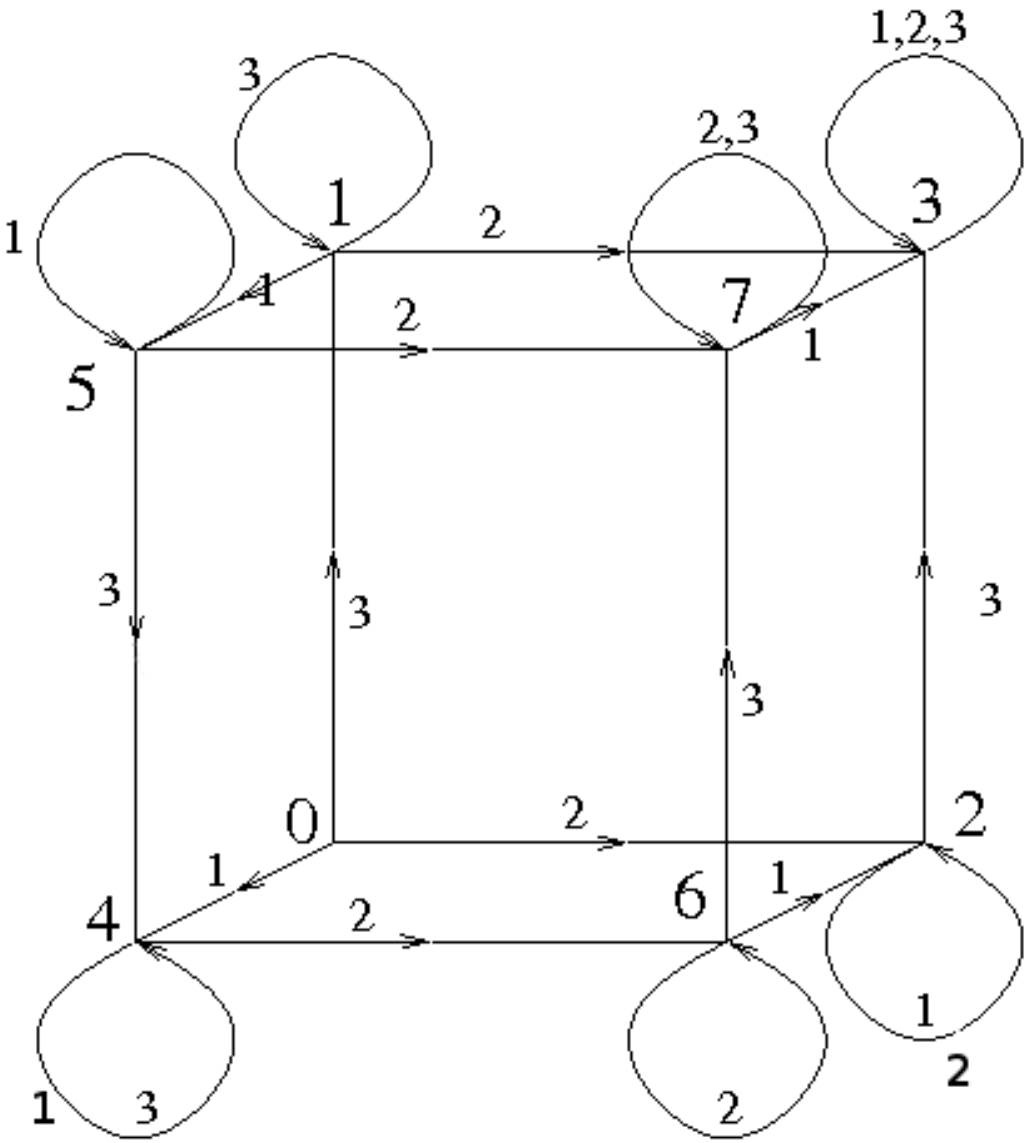}
   \caption{Example of an asynchronous iteration graph}
   \label{GTPIC}
   \end{figure}

Let $f$ be a map from $\mathds{B}^\mathsf{N}$ to itself. The
{\emph{asynchronous iteration graph}} associated with $f$ is the
directed graph $\Gamma(f)$ defined by: the set of vertices is
$\mathds{B}^\mathsf{N}$; for all $x\in\mathds{B}^\mathsf{N}$ and 
$i\in \llbracket1;\mathsf{N}\rrbracket$,
the graph $\Gamma(f)$ contains an arc from $x$ to $F_f(i,x)$. 
The relation between $\Gamma(f)$ and $G_f$ is clear: there exists a
path from $x$ to $x'$ in $\Gamma(f)$ if and only if there exists a
strategy $s$ such that the parallel iteration of $G_f$ from the
initial point $(s,x)$ reaches the point $x'$. Figure~\ref{GTPIC} presents
such an asynchronous iteration graph.
We thus have proven that~\cite{bcgr11:ip}.

\begin{theorem} 
$G_f$  is transitive, and thus chaotic according to Devaney, 
if  and only if $\Gamma(f)$ is strongly connected.
\end{theorem}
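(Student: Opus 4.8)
The plan is to derive the equivalence from the path correspondence recalled just above the statement: a directed path of length $k$ from $x$ to $x'$ in $\Gamma(f)$ exists if and only if there is a strategy $S$ such that the Boolean component of $G_f^{\,k}(S,x)$ equals $x'$, the first $k$ terms $S^0,\dots,S^{k-1}$ being exactly the arcs followed. Note that once transitivity of $G_f$ is established, chaos in the sense of Devaney is free: we invoke the previously recalled fact that for these particular iterated systems transitivity implies regularity, and since the space $\mathcal{X}$ is perfect, sensibility then follows automatically. Hence the task reduces to proving that $G_f$ is transitive if and only if $\Gamma(f)$ is strongly connected.

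First I would isolate the shape of the open sets of $(\mathcal{X},d)$. Since $d_e$ is a finite sum of discrete metrics and, by a short computation, $d_s<1$ always, the integer part of $d$ counts the differing Boolean cells while the fractional part is governed by the length of the common prefix of the two strategies. Consequently every open ball of radius $<1$ around $(\tilde S,\tilde E)$ is precisely the ``cylinder'' $\{(S,E) : E=\tilde E \text{ and } S^j=\tilde S^j \text{ for } j<p\}$ for a suitable $p$, and therefore every nonempty open subset of $\mathcal{X}$ contains a cylinder of the form $C(E_0;s_0,\dots,s_{p-1})$ with $E_0\in\mathds{B}^\mathsf{N}$ and the finite prefix fixed. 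This is the only point where the explicit distance intervenes; I expect it, together with the strategy-concatenation of the next paragraph, to be the main (though essentially routine) obstacle.

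For the ``if'' direction, assume $\Gamma(f)$ is strongly connected and take nonempty open sets $\mathcal{A},\mathcal{B}$, containing cylinders $C(E_0;s_0,\dots,s_{p-1})$ and $C(E_0';s_0',\dots,s_{q-1}')$ respectively. Let $E_1$ be the state reached from $E_0$ by applying $s_0,\dots,s_{p-1}$; by strong connectedness there is a path in $\Gamma(f)$ from $E_1$ to $E_0'$, realized by some arcs $t_0,\dots,t_{r-1}$. Build the strategy $S$ with prefix $(s_0,\dots,s_{p-1},t_0,\dots,t_{r-1},s_0',\dots,s_{q-1}',\dots)$, the remaining terms being arbitrary. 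Then $(S,E_0)\in\mathcal{A}$, while $G_f^{\,p+r}(S,E_0)=\bigl(\sigma^{p+r}S,\,E_0'\bigr)$ belongs to $C(E_0';s_0',\dots,s_{q-1}')\subseteq\mathcal{B}$; hence $G_f^{\,p+r}(\mathcal{A})\cap\mathcal{B}\neq\varnothing$ and $G_f$ is transitive.

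For the ``only if'' direction, assume $G_f$ is transitive and fix $x,x'\in\mathds{B}^\mathsf{N}$. Apply transitivity to the balls $\mathcal{A}=B\bigl((\tilde S,x),\tfrac12\bigr)$ and $\mathcal{B}=B\bigl((\tilde S',x'),\tfrac12\bigr)$: there exist $k\in\mathds{N}$ and a point $(S,x)\in\mathcal{A}$ with $G_f^{\,k}(S,x)\in\mathcal{B}$. Because the radius is $<1$, the Boolean component of $G_f^{\,k}(S,x)$ must equal $x'$, which by the path correspondence is exactly the existence of a directed path from $x$ to $x'$ in $\Gamma(f)$. As $x,x'$ were arbitrary, $\Gamma(f)$ is strongly connected. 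Combining the two directions with the remark on Devaney chaos above completes the proof.
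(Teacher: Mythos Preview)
The paper does not actually prove this theorem in the present text; it only states it and cites~\cite{bcgr11:ip}, so there is no in-paper argument to compare against. Your proof is correct and follows exactly the natural route the surrounding material suggests: cylinders as a basis for the topology, strategy concatenation along a $\Gamma(f)$-path for the forward implication, and the fact that balls of radius $<1$ pin down the Boolean component for the converse, with the Devaney conclusion then following from the paper's own remark that transitivity implies regularity here together with perfectness of $\mathcal{X}$.

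One small remark on presentation: your intermediate claim that an open ball of radius $<1$ is \emph{precisely} a cylinder $C(E_0;s_0,\dots,s_{p-1})$ is slightly stronger than what the metric $d_s=\frac{9}{\mathsf{N}}\sum_k |S^k-\check S^k|10^{-k}$ guarantees in general (the factor $9/\mathsf{N}$ spoils the exact digit-by-digit reading once $\mathsf{N}>9$). However, you only ever use the weaker---and true---facts that (i) radius $<1$ forces $E=\tilde E$, and (ii) every nonempty open set contains such a cylinder; so the argument is unaffected.
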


This characterization makes it possible to quantify the number of 
functions in $\mathcal{C}$: it is equal to
 $\left(2^\mathsf{N}\right)^{2^\mathsf{N}}$.
Then the study of the topological properties of disorder of these
iterative systems has been further investigated, leading to the following
results.

\begin{theorem}
 $\forall f \in \mathcal{C}$, $Per\left(G_f\right)$ is infinitely countable, $G_f$ is strongly transitive and is chaotic according to Knudsen. It is thus undecomposable, unstable, and chaotic as defined by Wiggins.
 \end{theorem}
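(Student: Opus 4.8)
The plan is to derive every assertion from the characterization recalled above, namely that for these iterative systems $f\in\mathcal{C}$ is equivalent to $\Gamma(f)$ being strongly connected, and that in this case $G_f$ is already known to be transitive and regular on $(\mathcal{X},d)$; I will combine this with the structural facts that $\mathcal{X}$ is infinitely countable, compact, complete and perfect. First I would settle that $Per(G_f)$ is infinitely countable: it is a subset of the countable set $\mathcal{X}$, hence at most countable, and it is infinite because, by regularity, it is dense in $\mathcal{X}$, and a finite subset of a metric space is closed, so a finite dense subset of the infinite space $\mathcal{X}$ would equal $\mathcal{X}$, a contradiction. (One may also argue constructively: strong connectivity of $\Gamma(f)$ produces, for each $E\in\mathds{B}^\mathsf{N}$, a cycle through $E$, whose sequence of labels read as a purely periodic strategy $S$ gives a periodic point $(S,E)$ of $G_f$; taking cycles of arbitrarily large length yields infinitely many such points.)

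Next I would prove strong transitivity directly via paths in $\Gamma(f)$. Fix $X=(S,E)$ and $Y=(\check S,\check E)$ in $\mathcal{X}$ and $r>0$, and choose $k$ large enough that the contribution to $d_s$ of all strategy terms of index at least $k$ is $<r$ (this is possible since that tail is bounded by $\tfrac{9}{\mathsf{N}}\sum_{j\geq k}\tfrac{\mathsf{N}-1}{10^{j}}=\tfrac{\mathsf{N}-1}{\mathsf{N}}10^{-(k-1)}\to 0$). Let $E'$ be the $\mathds{B}^\mathsf{N}$-component of $G_f^{\,k}(X)$, i.e. the configuration reached from $E$ by the successive applications of $F_f$ with labels $S^0,\dots,S^{k-1}$. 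Strong connectivity of $\Gamma(f)$ gives a path $E'\to\dots\to\check E$ in $\Gamma(f)$ with labels $i_0,\dots,i_{m-1}$. Define $z=(S',E)$ by $S'^{j}=S^{j}$ for $j<k$, $S'^{k+j}=i_j$ for $0\leq j<m$, and $S'^{k+m+j}=\check S^{j}$ for $j\geq 0$. Then $d_e(E,E)=0$ and $d_s(S,S')$ involves only indices $\geq k$, so $d(X,z)<r$, i.e. $z\in B(X,r)$; and by construction $G_f^{\,k+m}(z)=\bigl(\sigma^{k+m}(S'),\check E\bigr)=(\check S,\check E)=Y$. Hence $G_f$ is strongly transitive (which in particular re-proves topological transitivity).

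For Knudsen chaos I would note that sensitivity of $G_f$ is part of membership in $\mathcal{C}$ (one of Devaney's three ingredients, and in any case a consequence of transitivity plus density of periodic points on an infinite metric space), so it remains to produce a dense orbit. Since $(\mathcal{X},d)$ is a compact — hence complete and separable — metric space with no isolated point, and $G_f$ is continuous and transitive, the Birkhoff transitivity theorem yields a point whose $G_f$-orbit is dense in $\mathcal{X}$; alternatively, enumerating the countable set $\mathcal{X}$ and splicing together the connecting paths supplied by strong connectivity of $\Gamma(f)$, exactly as in the previous paragraph, gives an explicit strategy whose orbit meets every ball. Thus $G_f$ has a dense orbit and is sensitive, i.e. it is chaotic according to Knudsen.

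Finally the remaining consequences are short deductions from transitivity and sensitivity. Undecomposability: if $\mathcal{X}=A\cup B$ with $A,B$ nonempty, closed, proper and positively invariant, then $\mathcal{X}\setminus A$ is a nonempty open set contained in $B$; by transitivity there is $n$ with $G_f^{\,n}(\mathcal{X}\setminus A)\cap(\mathcal{X}\setminus B)\neq\varnothing$, giving some $u\notin A$ (hence $u\in B$) with $G_f^{\,n}(u)\notin B$, contradicting $G_f(B)\subseteq B$. Instability of every point is just a rephrasing of the (uniform) sensitivity already available, so $(\mathcal{X},G_f)$ is unstable. Chaos in the sense of Wiggins is transitivity together with sensitivity, both now established. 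The only step with genuine content is the strong-transitivity construction: the main obstacle there is the bookkeeping — choosing $k$ so that the perturbation of the strategy stays within $r$ while checking that the spliced strategy steers $z$ to $Y$ exactly, both in its configuration part and in its strategy tail; the rest is routine once transitivity, regularity, sensitivity and the topology of $\mathcal{X}$ are invoked.
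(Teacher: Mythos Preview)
The paper does not actually supply a proof of this theorem; it is stated as a result imported from earlier work, so there is nothing to compare your argument against line by line. On its own merits your plan is largely sound --- the constructions for strong transitivity and for a dense orbit via strong connectivity of $\Gamma(f)$ are exactly the right mechanism, and the deductions of undecomposability, instability, and Wiggins chaos from transitivity plus sensitivity are correct and routine.

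There is, however, one genuine gap. Your argument that $Per(G_f)$ is at most countable rests on the paper's Proposition asserting that $\mathcal{X}$ is ``infinitely countable''. That assertion cannot be taken literally: $\mathcal{X}=\llbracket 1;\mathsf{N}\rrbracket^{\mathds{N}}\times\mathds{B}^{\mathsf{N}}$ has the cardinality of the continuum for $\mathsf{N}\geq 2$, and indeed a compact perfect metric space is never countable. So ``$Per(G_f)\subset\mathcal{X}$ countable'' does not give countability of $Per(G_f)$. The correct reason is intrinsic to periodicity: if $G_f^{p}(S,E)=(S,E)$ then $\sigma^{p}(S)=S$, so $S$ is a purely periodic sequence; for each period $p$ there are only $\mathsf{N}^{p}$ such sequences, and $\mathds{B}^{\mathsf{N}}$ is finite, hence $Per(G_f)\subset\bigcup_{p\geq 1}\{\text{period-}p\text{ strategies}\}\times\mathds{B}^{\mathsf{N}}$ is a countable union of finite sets. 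Your argument for \emph{infinitude} of $Per(G_f)$ (finite dense subset of an infinite metric space would be the whole space) is fine and unaffected. With this single repair your proof goes through.
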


 \begin{theorem}
$\left(\mathcal{X}, G_{f_0}\right)$ is topologically mixing, 
     expansive (with a constant equal to 1), chaotic as defined by 
     Li and Yorke, and has a topological entropy and an exponent of Lyapunov 
     both equal to $ln(\mathsf{N})$.
\end{theorem}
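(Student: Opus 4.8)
The plan is to prove the five assertions separately, reducing each to the combinatorics of the asynchronous iteration graph $\Gamma(f_0)$ and to the explicit description of $d$ recalled above — $\lfloor d(X,Y)\rfloor$ counts the cells in which the two configurations disagree, and the fractional part of $d(X,Y)$ is $<10^{-k}$ exactly when the first $k$ strategy terms coincide — so that a basic open set is a cylinder $[E,p]$ fixing a configuration $E\in\mathds{B}^{\mathsf{N}}$ together with a finite strategy prefix $p$, and ``distance $\geqslant 1$'' means ``different configurations''. Recall that $\Gamma(f_0)$ is strongly connected since $f_0\in\mathcal{C}$. For \emph{topological mixing}, given nonempty open $U\supseteq[E_U,p_U]$ and $V\supseteq[E_V,p_V]$ with $|p_U|=\ell$, let $E'$ be the configuration reached by following $p_U$ from $E_U$ in $\Gamma(f_0)$. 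The crux is that $\Gamma(f_0)$ is moreover \emph{aperiodic} (this still has to be checked for the specific $f_0$; a single self-loop, e.g.\ at a configuration with a bit fixed by $f_0$, suffices), hence its adjacency matrix is primitive and there is an $M$ such that for all $m\geqslant M$ and all ordered pairs of configurations a walk of length exactly $m$ joins them. For $n\geqslant\ell+M$, concatenating $p_U$, a length-$(n-\ell)$ walk from $E'$ to $E_V$, and $p_V$ spells a strategy $S$ with $(S,E_U)\in U$ and $G_{f_0}^{\,n}(S,E_U)=(\sigma^{n}S,E_V)\in V$, so $G_{f_0}^{\,n}(U)\cap V\neq\varnothing$ for all large $n$. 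Establishing this aperiodicity — i.e.\ nailing down the combinatorics of the particular $f_0$, since Boolean negation for instance yields a bipartite, non-mixing graph — is the main obstacle of the whole proof; the remaining items are comparatively mechanical.

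\emph{Expansiveness with constant $1$.} Since the fractional part of $d$ is always $<1$, it suffices to show that for any $X\neq Y$ some iterate puts the two orbits in different configurations. If $X=(S,E)$ and $Y=(S',E')$ already have $E\neq E'$, take $n=0$. Otherwise $E=E'$ and $S\neq S'$; let $k$ be the first index with $S^{k}\neq S'^{k}$. After $k$ steps both orbits sit at a common configuration $E_{k}$ but read the distinct indices $a=S^{k}$ and $b=S'^{k}$, so the step-$(k+1)$ configurations $F_{f_0}(a,E_{k})$ and $F_{f_0}(b,E_{k})$ coincide only if $f_0$ fixes both bit $a$ and bit $b$ at $E_{k}$ — which is excluded because $f_0$ is chosen so that no configuration has two bits fixed by it. Hence $d\bigl(G_{f_0}^{\,k+1}(X),G_{f_0}^{\,k+1}(Y)\bigr)\geqslant 1$.

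\emph{Topological entropy and Lyapunov exponent.} For $\varepsilon<1$, $(n,\varepsilon)$-closeness forces the configurations to agree along the first $n$ iterates and, $\sigma^{i}S$ and $\sigma^{i}S'$ sharing a prefix of some length $m=m(\varepsilon)$ for each $i<n$, forces $S$ and $S'$ to agree on their first $n-1+m$ terms; thus $N(n,\varepsilon)\leqslant 2^{\mathsf{N}}\mathsf{N}^{\,n-1+m}$ and $\limsup_{n}\frac1n\log N(n,\varepsilon)\leqslant\log\mathsf{N}$. Conversely, fixing a configuration and letting the first $n$ strategy terms range freely gives $\mathsf{N}^{\,n}$ points, pairwise $(n,\varepsilon)$-separated for $\varepsilon<\tfrac1{10}$, since any two first differ at some $j<n$ and at step $j$ either the configurations already differ or the leading terms of $\sigma^{j}S$ and $\sigma^{j}S'$ do, forcing $d\geqslant\tfrac1{10}$. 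Letting $n\to\infty$ and then $\varepsilon\to0$ gives $h(G_{f_0})=\ln\mathsf{N}$. For $\lambda$, one transports $G_{f_0}$ through the homeomorphism of $\mathcal{X}$ onto a subset of $\mathds{R}$ built in~\cite{bgw09:ip,gb11:bc} (configuration $\leftrightarrow$ integer part, strategy $\leftrightarrow$ base-encoded fractional part): $G_{f_0}$ becomes a piecewise-affine map $g$ acting on the strategy part as the one-sided shift — multiplication by $\mathsf{N}$ up to a reduction — and on the configuration as a locally constant jump, so $|g'|\equiv\mathsf{N}$ where $g$ is differentiable and $\lambda(x^{0})=\lim_{n}\frac1n\sum_{i=1}^{n}\ln|g'(x^{i-1})|=\ln\mathsf{N}$.

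\emph{Li--Yorke chaos.} This now follows formally: a classical result asserts that a continuous self-map of a compact metric space with positive topological entropy is Li--Yorke chaotic, and $h(G_{f_0})=\ln\mathsf{N}>0$ once $\mathsf{N}\geqslant 2$. Should a self-contained argument be preferred, one uses topological mixing to interpolate, over a Cantor set of binary itineraries, strategies made of alternating ``synchronising'' blocks (driving two orbits $\varepsilon$-close, so $\liminf d=0$) and ``spreading'' blocks (driving their configurations apart, so $\limsup d\geqslant 1$), the compactness and perfectness of $\mathcal{X}$ ensuring the construction closes up into an uncountable scrambled set.
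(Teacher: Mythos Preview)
The present article is a survey: Theorem~4 is stated with a citation to earlier work and carries no proof here, so there is nothing in this paper to compare your argument against line by line.

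On the substance, your treatments of expansiveness, topological entropy, the Lyapunov exponent, and Li--Yorke chaos are essentially correct. A couple of cosmetic points: in the entropy lower bound the guaranteed separation at step $j$ is $d_s\geqslant \tfrac{9}{10\,\mathsf N}$ rather than $\tfrac1{10}$, which is harmless for the limit; and for ``constant equal to $1$'' you show that $\varepsilon=1$ works but do not exhibit a pair of orbits (same strategy, configurations differing in a single cell, so $d\equiv 1$ along the whole orbit) showing that no larger $\varepsilon$ does. Invoking the positive-entropy $\Rightarrow$ Li--Yorke theorem is a clean and legitimate shortcut.

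Your caveat on topological mixing, however, is more than a technicality. In this body of work $f_0$ denotes the vectorial Boolean negation, so $F_{f_0}(i,E)$ flips exactly the $i$-th bit and $\Gamma(f_0)$ is the directed $\mathsf N$-hypercube. That graph is bipartite by Hamming-weight parity and has no self-loop, hence is \emph{not} primitive: there is no walk of odd length between configurations of equal parity. Consequently, for basic cylinders $U=[E,\,\text{empty prefix}]$ and $V=[E',\,\text{empty prefix}]$ with $E,E'$ of different parities, one has $G_{f_0}^{\,n}(U)\cap V=\varnothing$ for every even $n$, so topological mixing, as defined in the paper, actually fails for this $f_0$. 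Your instinct to single out aperiodicity as ``the main obstacle'' is exactly right; you should upgrade it from a gap in \emph{your} proof to a genuine obstruction to the statement as written (what does survive is transitivity, total transitivity, and mixing of $G_{f_0}^{\,2}$, any of which may be what the cited source in fact establishes).
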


At this stage, a new kind of iterative systems that only manipulates 
integers have been discovered, leading to the questioning of their 
computing for security applications. In order to do so, the possibility
of their computation without any loss of chaotic properties has first 
been investigated. These chaotic machines are presented in the next
section.


\section{Chaotic Turing Machines}

\subsection{General presentation}

Let us consider a given algorithm. Because it must be 
computed one day, it is always possible to translate it as a Turing 
machine, and this last machine can be written as $x^{n+1} = f(x^n)$ in 
the following way. Let $(w,i,q)$ be the current configuration of the 
Turing machine (Figure~\ref{Turing}), where $w=\sharp^{-\omega} w(0) 
\hdots w(k)\sharp^{\omega}$ is the paper tape, $i$ is the position of 
the tape head, $q$ is used for the state of the machine, and $\delta$ is 
its transition function (the notations used here are well-known and 
widely used). We define $f$ by:
\begin{itemize}
\item $f(w(0) \hdots w(k),i,q) = ( w(0) \hdots w(i-1)aw(i+1)w(k),i+1,q')$, if  $\delta(q,w(i)) = (q',a,\rightarrow)$,
\item $f( w(0) \hdots w(k),i,q) = (w(0) \hdots w(i-1)aw(i+1)w(k),i-1,q')$,  if $\delta(q,w(i)) = (q',a,\leftarrow)$.
\end{itemize}
Thus the Turing machine can be written as an iterate function 
$x^{n+1}=f(x^n)$ on a well-defined set $\mathcal{X}$, with $x^0$ as the 
initial configuration of the machine. We denote by $\mathcal{T}(S)$ the 
iterative process of the algorithm $S$.

\begin{figure}
  \centering
\includegraphics[scale=0.5]{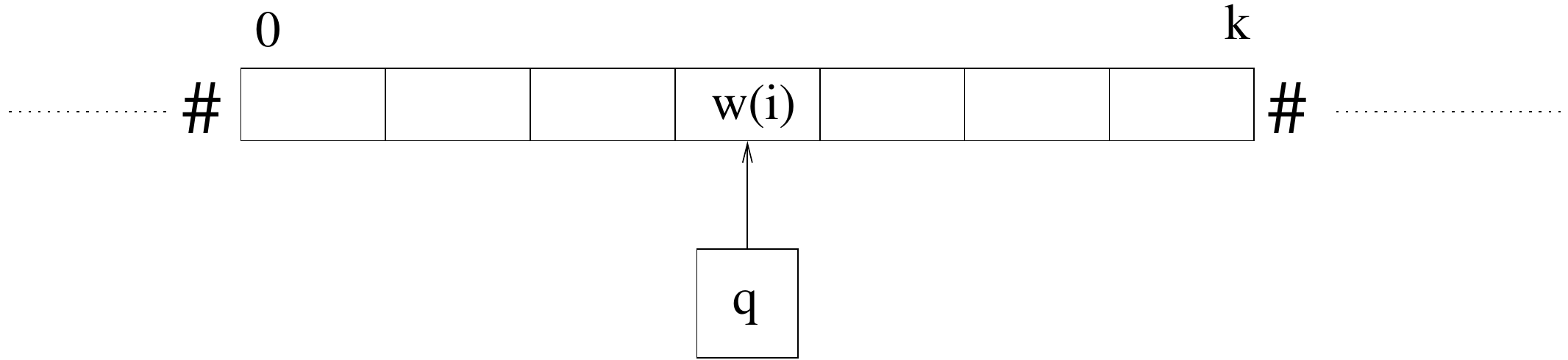}
\caption{Turing Machine}
\label{Turing}
\end{figure}

Let $\tau$ be a topology on $\mathcal{X}$. So the behavior of this 
dynamical system can be studied to know whether or not the algorithm
is $\tau$-chaotic. 
Let us now explain how it is possible to have true chaos in a finite state machine.

\subsection{Practical Issues}

\label{practical}
\label{Chaosincomputer}

%
%
%
Up to now, most of computer programs presented as chaotic lose 
their chaotic properties while computing in the finite set of machine numbers.
The algorithms that have been presented as chaotic usually act as follows.
After having received its initial state, the machine works alone with no interaction with the outside world.
Its outputs only depend on the different states of the machine.
The main problem which prevents speaking about chaos in this particular situation is that when a finite state machine reaches the same internal state twice, the two future evolution are identical.
Such a machine always finishes by entering into a cycle while iterating.
This highly predictable behavior cannot be set as chaotic, at least as expressed by Devaney.
Some attempts to define a discrete notion of chaos have been proposed, but 
they are not completely satisfactory and are less recognized than the notions exposed in a previous section.

The next stage was then to prove that chaos is possible in finite machine.
The two main problems are that: (1) Chaotic sequences are usually 
defined in the real line whereas define real numbers on computers 
is impossible. (2) All finite state machines always enter into a cycle
when iterating, and this periodic behavior cannot be stated as chaotic.

The first problem is disputable, as the shadow lemma proves that, when
considering the sequence $x^{n+1} = trunc_k\left(f(x^n)\right)$, where
$(f,[0,1])$ is a chaotic dynamical system and $trunc_k(x) = 
\dfrac{\lfloor 10^k x \rfloor}{10^k}$ is the truncated version of 
$x\in \mathds{R}$ at its $k-$th digits, then the sequence $(x^n)$ is as
close as possible to a real chaotic orbit. Thus iterating a chaotic 
function on floating point numbers does not deflate the chaotic behavior
as much. However, even if this first claim is not really a problem, we
have prevent from any disputation by considering a tool (the chaotic 
iterations) that only manipulates integers bounded by $\mathsf{N}$.

The second claim is surprisingly never considered as an issue when
considering the generation of randomness on computers.
However, the stated problem can be solved in the following way.
The computer must generate an output $O$ computed from its current state $E$ \emph{and} the current value of an input $S$, which changes at each iteration (Figure~\ref{fig:Mealy}).
Therefore, it is possible that the machine presents the same state twice, but with two future evolution completely different, depending on the values of the input.
By doing so, we thus obtain a machine with a finite number of states, which can evolve in infinitely different ways, due to the new values provided by the input at each iteration.
Thus such a machine can behave chaotically.

\begin{figure}
\centerline{\includegraphics[scale=0.5]{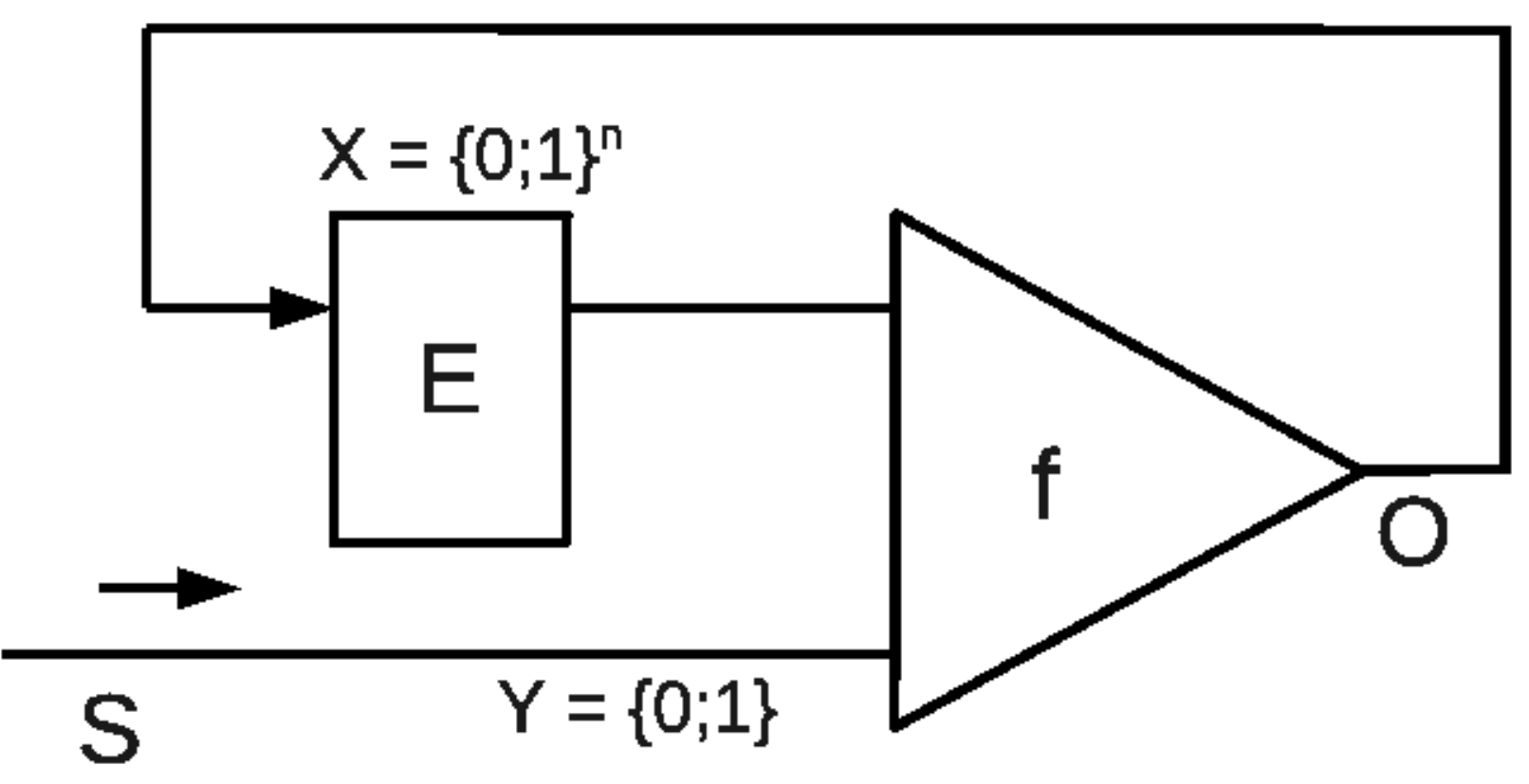}}
\caption{A chaotic finite-state machine. At each iteration, a new value is taken from the outside world (S). It is used by f as input together with the current state (E).}
\label{fig:Mealy}
\end{figure}

\section{Application to Hash Functions}

\subsection{Definitions}

This section is devoted to a concrete realization of such
a finite state chaotic machine in the computer science 
security field. We will show that, given a secured hash 
function, it is possible to realize a post-treatment on
the obtained digest using chaotic iterations that 
preserves the security of the hash function. Furthermore,
if the media to hash is obtained frame by frame from a stream,
the resulted hash machine inherits the chaos properties of
the chaotic iterations presented previously.
For the interest to add chaos properties to an hash function,
among other things regarding their diffusion and confusion~\cite{Shannon49},
reader is referred to the following experimental studies:~\cite{bcg11:ip,gb11:bc,bcg12:ij}.

Let us firstly introduce some definitions.

\begin{definition}[Keyed One-Way Hash Function]
Let $\Gamma$ and $\Sigma$ be two alphabets,   
let $k \in K$ be a key in a given key space,
let $l$ be a natural numbers which is the length of the output message,
and let $h : K \times  \Gamma^{+} \rightarrow \Sigma^{l}$ be a function that associates 
a message in $\Sigma^{l}$ for each pair of key, word in  
$K \times  \Gamma^{+}$.
The set of all functions $h$ is partitioned into classes
of functions $\{h_k : k \in K \}$
indexed by a key $k$ and such that 
$h_{k}: \Gamma^{+} \rightarrow \Sigma^{l}$ is defined by
$h_{k}(m) = h(k,m)$, \textit{i.e.}, $h_{k}$ generates a message digest of length $l$. 
\end{definition}

\begin{definition}[Collision resistance]
For a keyed hash function $h:\mathds{B}^k\times \mathds{B}^* \longrightarrow \mathds{B}^n$, define the advantage of an adversary $\mathsf{A}$ for finding a collision as
\begin{equation}
Adv_\mathsf{A} = Pr \left[\begin{array}{c} K \xleftarrow{\$} \mathds{B}^k \\ (m,m') \leftarrow \mathsf{A}(K)\end{array} : \begin{array}{c} m \neq m' \\ h(K,m)=h(K,m')\end{array}\right]
\end{equation}
where $\$$ means that the element is pick randomly. The insecurity of $h$ with respect to collision resistance is
\begin{equation}
InSec_h(t) = \max_\mathsf{A}\left\{Adv_\mathsf{A}\right\}
\end{equation}
when the maximum is taken over all adversaries $\mathsf{A}$ with total running time $t$.
\end{definition}

In other words, an adversary should not be able to find a collision, that is, two distinct messages $m$ and $m'$ such that $h(m) = h(m')$.

\begin{definition}[Second-Preimage Resistance]
For a keyed hash function $h:\mathds{B}^k\times \mathds{B}^* \longrightarrow \mathds{B}^n$, define the advantage of an adversary $\mathsf{A}$ for finding a second-preimage as
\begin{equation}
Adv_\mathsf{A}(m) = Pr \left[ \begin{array}{c}K \xleftarrow{\$} \mathds{B}^k \\ m' \xleftarrow{\$} \mathsf{A}(K)\end{array} : \begin{array}{c} m \neq m' \\ h(K,m)=h(K,m')\end{array}\right]
\end{equation}
The insecurity of $h$ with respect to collision resistance is
\begin{equation}
InSec_h(t) = \max_\mathsf{A}\left\{\max_{m\in\mathds{B}^k}\left\{Adv_\mathsf{A}(m)\right\}\right\}
\end{equation}
when the maximum is taken over all adversaries $\mathsf{A}$ with total running time $t$.
\end{definition}

That is to say, an adversary given a message $m$ should not be able to find another message $m'$ such that $m\neq m'$ and $h(m)=h(m')$.
Let us now give a post-operative mode that can be applied to a cryptographically secure hash function without loosing 
the cryptographic properties recalled above.


\begin{definition}
Let 
\begin{itemize}
\item $k_1,k_2,n \in \mathds{N}^*$,
\item $h:(k,m) \in \mathds{B}^{k_1}\times\mathds{B}^* \longmapsto h(k,m) \in \mathds{B}^n$ a keyed hash function,
\item $S:k\in \mathds{B}^{k_2} \longmapsto \left(S(k)^i\right)_{i\in \mathds{N}} \in \llbracket 1,n\rrbracket^\mathds{N}$:
\begin{itemize}
\item either a cryptographically secure pseudorandom number generator (PRNG), 
\item or, in case of a binary input stream $m = m^0 || m^1 ||  m^1 || \hdots$ 
 where $\forall i, |m^i| = n$, $\left(S(k)^i\right)_{i\in \mathds{N}} = \left(m^k\right)_{i\in \mathds{N}}$.
%
\end{itemize}
\item $\mathcal{K}=\mathds{B}^{k_1}\times\mathds{B}^{k_2}\times \mathds{N}$ called the \emph{key space}, 
\item and $f:\mathds{B}^n \longrightarrow \mathds{B}^n$ a bijective map.
\end{itemize}
We define the keyed hash function $\mathcal{H}_h:\mathcal{K}\times\mathds{B}^* \longrightarrow \mathds{B}^n$ by the following procedure\\
\begin{tabular}{ll}
\underline{\textbf{Inputs:}} & $k = (k_1,k_2,n)\in \mathcal{K}$\\
                            & $m \in \mathds{B}^*$\\
\underline{\textbf{Runs:}} & $X=h(k_1,m)$, or $X=h(k_1,m^0)$ if $m$ is a stream\\
                           & for $i=1, \hdots, n:$\\
                           & ~~~~ $X=G_f(X,S^i)$\\
                           & return $X$
\end{tabular}
\end{definition}
%
%
%
%
%
%
%
%
%
%
%

$\mathcal{H}_h$ is thus a chaotic iteration based post-treatment 
on the inputted hash function $h$.
The strategy is provided by a secured PRNG when the machine operates
in a vacuum whereas it is redetermined at each iteration from the
input stream in case of a finite machine open to the outside.
By doing so, we obtain a new hash function $\mathcal{H}_h$ with $h$, 
and this new one has a chaotic dependence regarding the inputted stream.

\subsection{Security proofs}

The two following lemma are obvious.

\begin{lemma}
If $f: \mathds{B}^n \longrightarrow \mathds{B}^n$ is bijective, then $\forall S \in \llbracket 1,n \rrbracket$, the map $G_{f,S}:x \in \mathds{B}^n \rightarrow G_f(x,S)_1 \in \mathds{B}^n$ is bijective too.
\end{lemma}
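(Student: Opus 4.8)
The plan is to exploit the finiteness of $\mathds{B}^n$: a self-map of a finite set is bijective if and only if it is injective (equivalently, if and only if it is surjective), so it suffices to prove that $G_{f,S}$ is injective. First I would unfold the definition of $G_{f,S}$: reading it through the formula for $F_f$ recalled earlier, $G_{f,S}$ is (up to the notational shorthand in its statement) the one‑step update map on $\mathds{B}^n$ that leaves every coordinate of $x$ untouched except the $S$‑th one, which is overwritten by $f(x)_S$; in symbols, $G_{f,S}(x)_j = x_j$ for all $j \neq S$ and $G_{f,S}(x)_S = f(x)_S$. Recording this description precisely is the only genuinely definitional step.

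Next I would assume $G_{f,S}(x) = G_{f,S}(y)$ for $x,y \in \mathds{B}^n$ and compare the two configurations coordinate by coordinate. For every $j \neq S$ this forces $x_j = y_j$, so $x$ and $y$ can differ at most in their $S$‑th bit. It then remains to rule out the case $x_S \neq y_S$, i.e. the case where $y$ is obtained from $x$ by flipping the $S$‑th bit. In that situation the equality of the $S$‑th coordinates of the images reads $f(x)_S = f(y)_S$, while $x$ and $y$ already agree on all other coordinates; this is the point where the hypothesis that $f$ is a bijection of $\mathds{B}^n$ is brought in, to exclude this possibility and conclude $x = y$, hence the injectivity — and therefore the bijectivity — of $G_{f,S}$ for every $S \in \llbracket 1,n\rrbracket$.

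The main obstacle, and the step I would treat with care, is exactly this last one: showing that two inputs agreeing everywhere except at coordinate $S$ cannot produce the same $S$‑th output bit, which amounts to the local injectivity of $f$ along the $S$‑th direction. Everything else — the reduction to injectivity via finiteness of $\mathds{B}^n$ and the coordinatewise comparison — is pure bookkeeping, which is no doubt why the authors qualify the lemma as obvious; the write‑up should thus consist essentially of the explicit description of $G_{f,S}$ followed by the short injectivity argument above.
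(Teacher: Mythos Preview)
Your outline is clear up to the final step, but that step is a genuine gap --- and not a cosmetic one. You correctly reduce to the situation where $x$ and $y$ agree on every coordinate except possibly the $S$-th, and then need $f(x)_S \neq f(y)_S$ whenever $x_S \neq y_S$. You say this is ``the point where the hypothesis that $f$ is a bijection of $\mathds{B}^n$ is brought in,'' but bijectivity of $f$ as a map $\mathds{B}^n\to\mathds{B}^n$ does \emph{not} force this coordinatewise conclusion: it only guarantees $f(x)\neq f(y)$ as full vectors, not that they differ in the $S$-th slot specifically.

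Concretely, take $n=2$ and $f(x_1,x_2)=(x_2,x_1)$, which is bijective. For $S=1$ one has $G_{f,1}(x)=(f(x)_1,x_2)=(x_2,x_2)$, so $G_{f,1}(0,0)=G_{f,1}(1,0)=(0,0)$ and $G_{f,1}$ is not injective. Thus the lemma, as stated with only ``$f$ bijective'' as hypothesis, is false; your missing step cannot be filled. For the record, the paper's own proof goes via surjectivity rather than injectivity but has the same defect: the displayed preimage $(y_1,\dots,y_{S-1},f^{-1}(y_S),y_{S+1},\dots,y_n)$ does not even type-check, since $f^{-1}$ eats vectors in $\mathds{B}^n$, not single bits $y_S$. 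What is actually needed is a stronger hypothesis on $f$ --- e.g.\ that flipping the $S$-th input bit always flips the $S$-th output bit, for every $S$ --- and you should flag this rather than wave at bijectivity.
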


\begin{proof}
Let $y=(y_1, \hdots, y_n) \in \mathds{B}^n$ and $S\in \llbracket 1,n \rrbracket$.
Thus $$G_{f,S}(y_1,\hdots,y_{S-1},f^{-1}(y_S),y_{S+1},\hdots, y_n)_1=y.$$
So $G_{f,S}$ is a surjective map between two finite sets.
\end{proof}

\begin{lemma}
Let $S \in \llbracket 1,n \rrbracket^\mathds{N}$ and $N \in \mathds{N}^*$. If $f$ is bijective, then $G_{f,S,N}:x \in \mathds{B}^n \longmapsto G_f^N(x,S)_1 \in \mathds{B}^n$ is bijective too.
\end{lemma}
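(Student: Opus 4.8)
The plan is to prove this by induction on $N$, using the previous lemma as the base case and a composition argument for the inductive step. First I would observe that $G_{f,S,N}$ can be expressed as an $N$-fold composition of single-step maps of the kind appearing in the previous lemma, but one must be careful: $G_f$ acts on the pair $(S,x)$, shifting the strategy at each step, so the component actually updated at step $j$ is $S^j$ rather than a fixed index. Concretely, iterating $G_f$ from $(S,x)$ produces after one step $(\sigma(S), F_f(S^0,x))$, so the first-coordinate evolution of the state is $x \mapsto F_f(S^0,x) \mapsto F_f(S^1, F_f(S^0,x)) \mapsto \cdots$. Hence $G_{f,S,N} = G_{f,S^{N-1}} \circ G_{f,S^{N-2}} \circ \cdots \circ G_{f,S^0}$, where each $G_{f,S^j}$ is the single-coordinate update map from the previous lemma (identifying $F_f(k,\cdot)_j$ with the relevant component).

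Once this decomposition is in place, the result is immediate: by the previous lemma, each factor $G_{f,S^j} : \mathds{B}^n \longrightarrow \mathds{B}^n$ is bijective since $f$ is bijective, and a composition of bijections is a bijection. So I would set up the induction with $N=1$ handled directly by the previous lemma, and for the inductive step write $G_{f,S,N+1} = G_{f,\sigma(S),N} \circ G_{f,S^0}$ (or the outermost-first variant), noting that $\sigma(S) \in \llbracket 1,n\rrbracket^\mathds{N}$ is again an admissible strategy so the induction hypothesis applies to it. Both factors being bijective, their composite is bijective.

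The main obstacle — really the only subtle point — is getting the bookkeeping of the shift right: one has to verify carefully that $G_f^N(x,S)_1$, which denotes the state-component of $G_f^N$ applied to $(S,x)$, genuinely equals the stated composition of single-step maps indexed by $S^0, S^1, \ldots, S^{N-1}$. This is a routine unwinding of the definition of $G_f(S,E) = (\sigma(S), F_f(i(S),E))$ together with the definition of $F_f$, but it is where an incautious proof could slip. Everything else — finiteness of $\mathds{B}^n$ making injectivity and surjectivity equivalent, closure of bijections under composition — is standard. I would therefore present the proof as: (i) unfold $G_f^N$ to exhibit the composition structure, (ii) invoke the previous lemma on each factor, (iii) conclude by composition, optionally phrased as a one-line induction on $N$.
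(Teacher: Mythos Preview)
Your proposal is correct and follows exactly the approach of the paper: decompose $G_{f,S,N}$ as the composition $G_{f,S^{N-1}} \circ \cdots \circ G_{f,S^0}$ of single-step maps, invoke the previous lemma on each factor, and conclude by closure of bijections under composition. The paper compresses this into a single line without the explicit induction or the shift bookkeeping you spell out, but the argument is the same.
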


\begin{proof}
Indeed, $G_{s,f,n}=G_{f,S^n}\circ \hdots \circ G_{f,S^0}$ is bijective as a composition of bijective maps.
\end{proof}

We can now state that,

\begin{theorem}
If $h$ satisfies the collision resistance property, then it is the case too for $\mathcal{H}_h$.
And if $h$ satisfies the second-preimage resistance property, then it is the case too for $\mathcal{H}_h$.
\end{theorem}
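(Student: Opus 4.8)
The plan is to reduce a collision (resp.\ second-preimage) for $\mathcal{H}_h$ to one for $h$ by exploiting the bijectivity of the chaotic-iteration post-treatment. First I would observe that, for a fixed key $k=(k_1,k_2,n)\in\mathcal{K}$, the map $\mathcal{H}_h(k,\cdot)$ factors as $\mathcal{H}_h(k,m) = G_{f,S(k_2),n}\bigl(h(k_1,m)\bigr)$, where $G_{f,S(k_2),n}$ is the $n$-fold asynchronous iteration map of the second lemma. Since $f$ is assumed bijective, that lemma tells us $G_{f,S(k_2),n}:\mathds{B}^n\to\mathds{B}^n$ is a bijection of the finite set $\mathds{B}^n$; call its inverse $\Phi^{-1}$. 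Hence $\mathcal{H}_h(k,m)=\mathcal{H}_h(k,m')$ holds if and only if $h(k_1,m)=h(k_1,m')$: applying $\Phi^{-1}$ to both sides gives one implication, and applying $\Phi$ gives the converse. In particular $m\neq m'$ with $\mathcal{H}_h(k,m)=\mathcal{H}_h(k,m')$ is \emph{exactly} a collision for $h$ under key $k_1$.

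With this equivalence in hand, the security reduction is immediate. Given an adversary $\mathsf{A}$ against the collision resistance of $\mathcal{H}_h$ running in time $t$, I would build an adversary $\mathsf{B}$ against $h$ as follows: on input $K=k_1$, $\mathsf{B}$ samples $k_2$ and $n$ itself (these are public-style parameters, or part of the syntactic key that $\mathsf{B}$ is free to choose), forms the $\mathcal{H}_h$-key $k=(k_1,k_2,n)$, runs $(m,m')\leftarrow\mathsf{A}(k)$, and outputs the same pair $(m,m')$. By the equivalence above, $\mathsf{B}$ succeeds whenever $\mathsf{A}$ does, so $Adv_{\mathsf{B}}\geqslant Adv_{\mathsf{A}}$ up to the choice of $(k_2,n)$, and $\mathsf{B}$'s running time is $t$ plus the (polynomial) cost of evaluating $n$ steps of $G_f$. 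Taking the maximum over all such $\mathsf{A}$ yields $InSec_{\mathcal{H}_h}(t)\leqslant InSec_h(t')$ with $t'=t+O(\mathrm{poly}(n))$, which is the claimed preservation of collision resistance. The second-preimage case is the same argument verbatim: the challenger hands $\mathsf{A}$ a message $m$; since $\mathcal{H}_h(k,m')=\mathcal{H}_h(k,m)\iff h(k_1,m')=h(k_1,m)$, any second-preimage $m'$ for $\mathcal{H}_h$ is a second-preimage for $h$ on the same message $m$, so the reduction $\mathsf{B}$ simply forwards $m$ and relays $\mathsf{A}$'s answer.

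The only real obstacle is bookkeeping rather than mathematics: one must be careful about which parts of the key $\mathsf{B}$ controls versus which are drawn by the challenger, so that the reduction's success probability genuinely dominates $\mathsf{A}$'s and the quantifiers in the definitions of $InSec$ line up. Since $k_2$ (the PRNG/stream key) and $n$ are parameters the reduction may fix while the collision-relevant key is $k_1$, and since the bijection $G_{f,S(k_2),n}$ does not depend on the message at all, this goes through cleanly; there is no loss beyond the negligible running-time overhead. I would therefore present the proof as: (1) the factorization and the iff-equivalence of collisions via the two lemmas, then (2) the two one-line reductions, concluding $InSec_{\mathcal{H}_h}(t)\leqslant InSec_h(t+O(\mathrm{poly}(n)))$ in both cases.
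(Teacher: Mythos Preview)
Your proposal is correct and follows essentially the same approach as the paper: both arguments factor $\mathcal{H}_h(k,\cdot)$ as $G_{f,S(k_2),n}\circ h(k_1,\cdot)$, invoke the second lemma to get that the post-treatment $G_{f,S(k_2),n}$ is a bijection on $\mathds{B}^n$, and conclude that any collision or second-preimage for $\mathcal{H}_h$ is immediately one for $h$ under key $k_1$. The paper's proof is extremely terse (it just writes the two implications in one line each and omits the explicit adversary reduction), whereas you spell out the reduction $\mathsf{B}$, the handling of the auxiliary key components $(k_2,n)$, and the running-time overhead; this extra rigor is welcome but not a different route.
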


\begin{proof}
Let $A(k_1,k_2,n)=(m_1,m_2)$ such that $\mathcal{H}_h\left((k_1,k_2,n),m_1\right) = \mathcal{H}_h\left((k_1,k_2,n),m_2\right)$. Then $G_{f,S(k_2),n}\left(h(m_1)\right) = G_{f,S(k_2),n}\left(h(m_2)\right)$. So $h(m_1,k_1)=h(m_2,k_1)$.

For the second-preimage resistance property, let $m,k \in \mathds{B}^*\times\mathcal{K}$. If a message $m'\in\mathds{B}^*$ can be found such that $\mathcal{H}_h(k,m)=\mathcal{H}_h(k,m')$, then $h(k_1,m)=h(k_1,m')$: a second-preimage for $h$ has thus be found.
\end{proof}

Finally, as $\mathcal{H}_h$ simply operates chaotic iterations with strategy $\mathcal{S}$ provided at each iterate by the media, we have:

\begin{theorem}
In case where the strategy $\mathcal{S}$ is the bitwise xor between a secured PRNG and the input stream,
the resulted hash function  $\mathcal{H}_h$ is chaotic.
\end{theorem}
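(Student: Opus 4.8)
The plan is to reduce the claimed chaoticity of $\mathcal{H}_h$ to the already-established chaoticity of $G_f$ as a discrete dynamical system on $(\mathcal{X},d)$, using the earlier theorems on chaotic iterations. The key observation is that when the strategy $\mathcal{S}$ is the bitwise xor of a secured PRNG output with the input stream, then as the input stream ranges over $\mathds{B}^*$ (equivalently, over $\llbracket 1,n\rrbracket^\mathds{N}$ after the xor), the pair $(S,X)$ with $X = h(k_1,m^0)$ the seed ranges over a set that is dense in $\mathcal{X} = \llbracket 1;\mathsf{N}\rrbracket^\mathds{N}\times \mathds{B}^\mathsf{N}$ (here $\mathsf{N}=n$). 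Concretely, I would first argue that the xor with the input stream acts as a bijection on the strategy space for each fixed PRNG output, so that every strategy $S\in\llbracket 1,n\rrbracket^\mathds{N}$ is attainable; and that $h$ being a hash function onto $\mathds{B}^n$, together with the freedom in choosing the first frame $m^0$, lets the initial inner state $X^0$ range over all of $\mathds{B}^n$. Hence the set of reachable initial conditions $(S,X^0)$ is (a dense subset of, in fact all of) $\mathcal{X}$.

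Next I would make precise in which sense $\mathcal{H}_h$ "is chaotic": the natural reading, consistent with the Turing-machine/chaotic-finite-machine framework of Section~4, is that the map sending the stream-prefix and seed to the running state $X$ is exactly the iteration of $G_f$ on $\mathcal{X}$, i.e. $\mathcal{H}_h$ computes $X^{k+1} = G_f(X^k)$ with $X^0 = (S(k_2)\oplus m, h(k_1,m^0))$. Since the excerpt has established (Theorem: $G_{f_0}$ regular and transitive, hence Devaney-chaotic; and the characterization via strong connectedness of $\Gamma(f)$, plus Knudsen/Wiggins/Li-Yorke/topological-mixing/positive-entropy results) that $G_f$ is chaotic in all these senses whenever $f\in\mathcal{C}$ (and in particular one chooses $f$ bijective with $\Gamma(f)$ strongly connected), the dynamics driving $\mathcal{H}_h$ inherits every one of these properties verbatim. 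The remaining point is the continuity/well-definedness of the input-to-output correspondence: I would invoke the Proposition that $G_f:(\mathcal{X},d)\to(\mathcal{X},d)$ is continuous, and note that the map "input stream $\mapsto$ strategy via xor" is an isometry-like bijection on $\mathcal{S}$ compatible with $d_s$, so slight alterations of the input stream correspond to nearby points of $\mathcal{X}$, on which $G_f$ exhibits sensitive dependence with constant $>\mathsf{N}-1$.

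Putting these together: given any $f\in\mathcal{C}$ (so $\Gamma(f)$ strongly connected; note $f$ bijective is compatible with this, and the two preceding lemmas already assume $f$ bijective), the system $(\mathcal{X},G_f)$ is transitive and regular, hence Devaney-chaotic; the input stream fed to $\mathcal{H}_h$ parametrizes, after xor with the secured PRNG, all of the strategy space, and the seed parametrizes all inner states, so the chaotic behavior of $G_f$ on $\mathcal{X}$ transfers to the input/output behavior of the finite machine $\mathcal{H}_h$. I expect the main obstacle to be stating cleanly the precise correspondence between "the hash machine $\mathcal{H}_h$ as a function of its stream input" and "the dynamical system $(\mathcal{X},G_f)$" — in particular, making rigorous that ranging over input streams yields a dense (indeed full) set of trajectories of $G_f$, and that the xor-with-PRNG step neither collapses nor distorts the relevant topology on strategies; once that dictionary is fixed, the chaos properties are simply imported from the earlier theorems with no further computation.
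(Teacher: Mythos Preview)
Your proposal is correct and follows exactly the paper's approach: reduce the chaoticity of $\mathcal{H}_h$ to the already-proven chaoticity of $G_f$ on $(\mathcal{X},d)$ by observing that $\mathcal{H}_h$ \emph{is} the iteration of $G_f$ with strategy supplied by the input stream. In fact the paper offers no formal proof at all---its entire justification is the single sentence ``as $\mathcal{H}_h$ simply operates chaotic iterations with strategy $\mathcal{S}$ provided at each iterate by the media''---so your write-up, with its explicit check that the xor bijection makes every strategy attainable and that the correspondence respects the topology $d$, is considerably more detailed than what the paper itself provides.
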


\begin{remark}
$\mathcal{S}$ should be $m^k\oplus x^k$ where $(x^k)$ is provided by a secured PRNG if security of $\mathcal{H}_h$ is required.
\end{remark}

\section{Conclusion}

In this article, the research we have previously done in the field of truly chaotic
finite machines are summarized and clarified to serve as an introduction to our 
approach. This approach consists in considering a specific family of discrete
dynamical systems that iterate on a set having the form 
$\mathcal{X}=\mathcal{P}\left(\llbracket 1, N \rrbracket\right)^\mathds{N}\times \mathds{B}^\mathds{N}$,
making it possible to obtain pure, non degenerated chaos on finite machines. These particular
dynamical systems are called chaotic iterations. Our method consists in considering the left part
of $\mathcal{X}$ as the tape of the Turing machine whereas the right part is the
state register of the machine. Chaos implies here that if the initial tape and the initial state
are not known exactly, then for some transition function the evolution of the iterates of
the Turing machine, or in other words the evolution of the state register and of the tape,
cannot be predicted. We remark too that the initial tape has not to be inputted integrally into
the machine: it can be provided by, for instance, a video stream whose hash value is updated
at each new received frame.

In our previous research papers, we have provided a necessary and sufficient condition on a Moore
machine to behave as chaotic, this condition being the strong connectivity of an associated
large graph. A sufficient condition of chaoticity on a smaller graph has been proven too.
In future work, the authors' intention is to extend these results to the Turing machines, to
determine on which conditions on the transition function such machines have a stochastic 
behavior. Furthermore, only a special kind of Turing machines has been investigated until now, 
and the authors' desire is to extend these results to all possible machines. A concrete 
chaotic machine should then be designed and studied. Finally new applications will be
detailed.

\bibliographystyle{plain}
\bibliography{mabase}

\end{document}